\newcommand{\CCCGVer}[1]{}%
\newcommand{\RegVer}[1]{#1}%
\newcommand{\CCCGVer}[1]{#1}%
\newcommand{\RegVer}[1]{}%
   \def\UseBibLatex{1}%
\newcommand{\SarielComp}[1]{}
\newcommand{\NotSarielComp}[1]{#1}%
\newcommand{\SarielComp}[1]{#1}%
\newcommand{\NotSarielComp}[1]{}%
\newcommand{\UsePackage}[1]{%
  \IfFileExists{styles/#1.sty}{%
      \usepackage{styles/#1}%
   }{%
      \IfFileExists{../styles/#1.sty}{%
         \usepackage{../styles/#1}%
      }{%
         \usepackage{#1}%
      }%
   }%
}
   \newtheorem{problem}[theorem]{Problem}
   \newtheorem{corollary}[theorem]{Corollary}%
\theoremstyle{plain}%
\newtheorem{theorem}{Theorem}[section]
\newtheorem{lemma}[theorem]{Lemma}
\newtheorem{corollary}[theorem]{Corollary}
\theoremstyle{plain}%
\newtheorem*{remark:unnumbered}[theorem]{Remark}%
\newtheorem{problem}[theorem]{Problem}
\newcommand{\myqedsymbol}{\rule{2mm}{2mm}}
\theoremstyle{nonumberplain}%
\newtheorem{proof}{Proof:}%
\definecolor{blue25emph}{rgb}{0, 0, 11}
\providecommand{\emphic}[2]{%
   \textcolor{blue25emph}{%
      \textbf{\emph{#1}}}%
   \index{#2}}
\providecommand{\emphi}[1]{\emphic{#1}{#1}}
\definecolor{almostblack}{rgb}{0, 0, 0.3}
\providecommand{\emphw}[1]{{\textcolor{almostblack}{\emph{#1}}}}%
\newcommand{\atgen}{\symbol{'100}}
\newcommand{\SarielThanks}[1]{\thanks{Department of Computer Science;
      University of Illinois; 201 N. Goodwin Avenue; Urbana, IL,
      61801, USA; {\tt sariel\atgen{}illinois.edu}; {\tt
         \url{http://sarielhp.org/}.} #1}}
\newcommand{\BenThanks}[1]{\thanks{Department of Computer Science;
      University of Texas at Dallas; Richardson, TX 75080, USA;
      {\tt benjamin.raichel\atgen{}utdallas.edu}; {\tt
         \url{http://utdallas.edu/\string~benjamin.raichel}.} #1}}
\newcommand{\HLink}[2]{\hyperref[#2]{#1~\ref*{#2}}}
\newcommand{\HLinkSuffix}[3]{\hyperref[#2]{#1\ref*{#2}{#3}}}
\newcommand{\thmlab}[1]{{\label{theo:#1}}}
\newcommand{\thmref}[1]{\HLink{Theorem}{theo:#1}}
\newcommand{\corlab}[1]{\label{cor:#1}}
\newcommand{\corref}[1]{\HLink{Corollary}{cor:#1}}%
\newcommand{\problab}[1]{\label{problem:#1}}
\newcommand{\lemlab}[1]{\label{lemma:#1}}
\newcommand{\lemref}[1]{\HLink{Lemma}{lemma:#1}}%
\providecommand{\eqlab}[1]{}%
\renewcommand{\eqlab}[1]{\label{equation:#1}}
\newcommand{\Eqref}[1]{\HLinkSuffix{Eq.~(}{equation:#1}{)}}
\newcommand{\remove}[1]{}%
\newcommand{\Set}[2]{\left\{ #1 \;\middle\vert\; #2 \right\}}
\newcommand{\pth}[2][\!]{\mleft({#2}\mright)}%
\newcommand{\cardin}[1]{\lvert {#1} \rvert}%
\renewcommand{\Re}{\mathbb{R}}%
\newlist{compactenumA}{enumerate}{5}%
\setlist[compactenumA]{topsep=0pt,itemsep=-1ex,partopsep=1ex,parsep=1ex,%
   label=(\Alph*)}%
\newlist{compactenuma}{enumerate}{5}%
\setlist[compactenuma]{topsep=0pt,itemsep=-1ex,partopsep=1ex,parsep=1ex,%
   label=(\alph*)}%
\newlist{compactenumI}{enumerate}{5}%
\setlist[compactenumI]{topsep=0pt,itemsep=-1ex,partopsep=1ex,parsep=1ex,%
   label=(\Roman*)}%
\newlist{compactenumi}{enumerate}{5}%
\setlist[compactenumi]{topsep=0pt,itemsep=-1ex,partopsep=1ex,parsep=1ex,%
   label=(\roman*)}%
\newlist{compactitem}{itemize}{5}%
\setlist[compactitem]{topsep=0pt,itemsep=-1ex,partopsep=1ex,parsep=1ex,%
   label=\ensuremath{\bullet}}%
\providecommand{\BibLatexMode}[1]{}
\providecommand{\BibTexMode}[1]{#1}
  \renewcommand{\BibLatexMode}[1]{}
  \renewcommand{\BibTexMode}[1]{#1}
  \renewcommand{\BibLatexMode}[1]{#1}
  \renewcommand{\BibTexMode}[1]{}
\providecommand{\Mh}[1]{#1}%
\newcommand{\CHX}[1]{{\mathcal{C}}\pth{#1}}%
\newcommand{\dsY}[2]{\mathsf{d}\pth{#1,#2}}
\newcommand{\DHY}[2]{\mathsf{D}_H\pth{#1,#2}}
\newcommand{\dsHY}[2]{\mathsf{d}_H\pth{#1,#2}}
\newcommand{\dsHDY}[2]{\mathsf{d}\pth{#1 \rightarrow #2}}
\newcommand{\normX}[1]{\lVert #1 \rVert}%
\newcommand{\dY}[2]{\normX{#1 #2}}%
\renewcommand{\Re}{\mathbb{R}}%
\newcommand{\eps}{{\varepsilon}}%
\newcommand{\etal}{\textit{et~al.}\xspace}
\newcommand{\AlgorithmI}[1]{{\textcolor[named]{RedViolet}{\texttt{\bf{#1}}}}}
\newcommand{\Algorithm}[1]{{\AlgorithmI{#1}\index{algorithm!#1@{\AlgorithmI{#1}}}}}
\newcommand{\decider}{\Algorithm{decider}\xspace}
\newcommand{\extremal}{\Algorithm{extremal}\xspace}
\providecommand{\P}{} \renewcommand{\P}{\Mh{P}}%
\providecommand{\Q}{}
\renewcommand{\Q}{\Mh{Q}}%
\providecommand{\CS}{}
\renewcommand{\CS}{\Mh{Q}}%
\newcommand{\optX}[1]{#1^\star}
\newcommand{\Qopt}{\Mh{\optX{Q}}}%
\newcommand{\MinCardin}{\textsf{Min{}Cardin}}%
\newcommand{\refMinCardin}{\hyperref[problem:min:k]{\MinCardin}\xspace}
\newcommand{\MinDist}{\textsf{Min{}Dist}}%
\newcommand{\refMinDist}{\hyperref[problem:min:dist]{\MinDist}\xspace}
\newcommand{\rad}{\Mh{\tau}}%
\newcommand{\ropt}{\optX{\rad}}%
\newcommand{\roptY}[2]{\optX{\rad}\pth{#1,#2}}%
\newcommand{\Family}{\Mh{\mathcal{F}}}%
\newcommand{\FeqX}[1]{\Family_{\leq #1}}
\newcommand{\VX}[1]{\mathsf{V}\pth{#1}}
\newcommand{\kopt}{\Mh{\optX{k}}}%
\newcommand{\koptY}[2]{\Mh{\optX{k}}\pth{#1,#2}}%
\newcommand{\RunningTime}{O\bigl(n^{3/2} \sqrt{k} \log^{3/2} n + kn
   \log^2 n\bigr)}%
\begin{document}

\title{On the Budgeted Hausdorff Distance Problem}

\author{%
   Sariel Har-Peled%
   \SarielThanks{Work on this paper was partially supported by a NSF
      AF award CCF-1907400.  }%
   \and%
   Benjamin Raichel%
   \BenThanks{Work on this paper was partially supported by NSF CAREER
      Award 1750780.}  }

\date{}%

\maketitle
\begin{abstract}
    Given a set $\P$ of $n$ points in the plane, and a parameter $k$,
    we present an algorithm, whose running time is $\RunningTime$,
    with high probability, that computes a subset $\Qopt \subseteq \P$
    of $k$ points, that minimizes the Hausdorff distance between the
    convex-hulls of $\Qopt$ and $\P$. This is the first subquadratic
    algorithm for this problem if $k$ is small.
\end{abstract}

\section{Introduction}

Given a set of points $\P$ in $\Re^d$, a natural goal is to find a
small subset of it that represents the point set well. This problem
has attracted a lot of interest over the last two decades, and this
subset of $\P$ is usually referred to as a \emph{coreset}
\cite{ahv-aemp-04, ahv-gavc-05}. An alternative approximation is
provided by the largest enclosed ellipsoid inside $\CHX{\P}$ (here
$\CHX{\P}$ denotes the \emph{convex-hull} of $\P$) or the smallest
area bounding box of $\P$ (not necessarily axis-aligned). This provides a
constant approximation to the projection width of $\P$ in any
direction $v$ -- that is, the projection of $\P$ into the line spanned
by $v$ is contained in the projection of the ellipsoid after
appropriate constant scaling.  One can show that in two dimensions,
there is a subset $\CS \subseteq \P$ (i.e., a coreset) of size
$O(1/\sqrt{\eps})$ such that the projection width of $\P$ and $\CS$ is
the same up to scaling by $1+\eps$. See Agarwal \etal
\cite{ahv-aemp-04, ahv-gavc-05} for more details.

The concept of a coreset is attractive as it provides a notion of
approximating that adapts to the shape of the point set. However, an
older and arguably simpler approach is to require that $\CHX{\CS}$
approximates $\CHX{\P}$ within a certain absolute error threshold. A
natural such measure is the \emphi{Hausdorff} distance between sets
$X,Y \subseteq \Re^2$, which is
\begin{equation}
    \dsHY{X}{Y}%
    =%
    \max\pth{ \dsHDY{X}{Y},\,  \dsHDY{Y}{X}\Bigr.},
    \qquad
\end{equation}
where
\begin{equation*}
    \dsHDY{X}{Y} = \max_{x \in X} \min_{y \in Y}\dY{x}{y}
    \eqlab{d:h}.
\end{equation*}

In our specific case, the two sets are $\CHX{\P}$ and $\CHX{\CS}$, and
let $\DHY{\CS}{\P} = \dsHY{\CHX{\CS}}{\CHX{\P} \bigr.}$. The natural
questions are
\begin{compactenumI}
    \medskip%
    \item \refMinCardin: Compute the smallest subset
    $\CS \subseteq \P$, such that $\DHY{\CS}{\P} \leq \rad$, where
    $\rad$ is a prespecified error threshold. Formally, let
    \begin{equation*}
        \FeqX{\rad} = \Set{\CS \subseteq \P}{\DHY{\CS}{\P} \leq \rad
           \bigr.},
    \end{equation*}
    and let
    $\kopt = \koptY{\P}{\rad} = \min_{\CS \in \FeqX{\rad}}
    \cardin{\CS}$ denote the minimum cardinality of such a set $\CS$.

    \medskip%
    \item \refMinDist: Compute the subset $\CS \subseteq \P$ of size
    $k$, such that $\DHY{\CS}{\P}$ is minimized, where $k$ is a
    prespecified subset size threshold. Let
    $\ropt = \roptY{\P}{k} = \min_{\CS \subseteq \P: \cardin{\CS}= k}
    \DHY{\CS}{\P}$ denote the optimal radius.
\end{compactenumI}
\medskip%
The two problems are ``dual'' to each other -- solve one, and you get
a way to solve the other in polynomial time via a search on the values
of the other parameter. In particular, solving both problems directly
(in two dimensions) can be done via dynamic programming, but even
getting a subcubic running time is not immediate in this case.
Indeed, the problem seems to have a surprisingly subtle and intricate
structure that make this problem more challenging than it seems at first.

Klimenko and Raichel \cite{kr-fechs-21} provided an $O(n^{2.53})$ time
algorithm for \refMinCardin. Very recently, Agarwal and Har-Peled \cite{ah-coktd-23}
provided a near-linear time algorithm for \refMinCardin that runs in
near linear time if $\kopt = \koptY{\P}{\rad}$ is small. Specifically,
the running time of this algorithm is $O( \kopt n \log n)$.

The purpose of this work is to come up with a subquadratic algorithm
for the ``dual'' problem \refMinDist. An algorithm with running time
$O(n^2 \log n)$ follows readily by computing all possible critical
values, and performing a binary search over these values, using the
procedure of \cite{ah-coktd-23} as a black box.  The only subquadratic
algorithm known previously was for the special case when $\P$ is in
convex position, for which \cite{kr-fechs-21} gave an algorithm whose
running time is $O(n\log^3 n)$ with high probability.

Our main result is an algorithm that, given $\P$ and $k$ as input,
solves \refMinDist in $\RunningTime$ time, with high probability, see
\thmref{main:res} for details. We believe the algorithm itself is
technically interesting -- it uses random sampling to reduce the range
of interest into an interval containing $O( \sqrt{n})$ critical
values. It then use the decision procedure of \cite{ah-coktd-23} as a
way to compute the critical values in this interval, by ``peeling''
them one by one in decreasing order. Using random sampling for
parametric search is an old idea, see \cite{hr-fdre-14} and references
there.

\section{Preliminaries}

Given a point set $X$ in $\Re^2$, let $\CHX{X}$ denote its
\emphi{convex hull}.  For two compact sets $X,Y\subset \Re^2$, let
$\dsY{X}{Y} = \min_{x\in X, y\in Y} \dY{x}{y}$ denote their
distance. For a single point $x$ let $\dsY{x}{Y} = \dsY{\{x\}}{Y}$.

Consider two finite point sets $\Q \subseteq \P\subset \Re^2$, and
observe that
\[
    \DHY{\Q}{\P}%
    =%
    \dsHY{\CHX{\CS}}{\CHX{\P} \bigr.}
    =%
    \max_{p \in \P} \dsY{p}{\CHX{\Q}\bigr.},
\]
see \Eqref{d:h}.
The first equality above is by definition, and the second is since $Q\subseteq P$ and so we have that
$\CHX{Q}\subseteq \CHX{P}$, and moreover the furthest point in
$\CHX{P}$ from $\CHX{Q}$ is always a point in $P$.

In this paper we consider the following two related problems, where
for simplicity, we assume that $P$ is in general position.

\begin{problem}[{\MinCardin}]%
    \problab{min:k}%
    Given a set $\P \subset \Re^2$ of $n$ points, and a value
    $\rad > 0$, find the smallest cardinality subset $\CS\subseteq \P$
    such that $\DHY{\CS}{\P}\leq \rad$. %
\end{problem}

\begin{problem}[\MinDist]%
    \problab{min:dist}
    Given a set $\P \subset \Re^2$ of $n$ points, and an integer $k$,
    find the subset $\Q\subseteq \P$ that minimizes $\DHY{\Q}{\P}$
    subject to the constraint that $\cardin{\Q}\leq k$.
\end{problem}

For either problem let $\Qopt$ denote an optimal solution. For
\refMinCardin let $\kopt=\koptY{ \P}{ \rad}=\cardin{\Qopt}$, and for
\refMinDist let $\ropt=\roptY{\P}{k}=\DHY{\Qopt}{\P}$.  The algorithms
discussed in this paper will output the set $\Qopt$, though when it
eases the exposition, we occasionally refer to $\kopt$ as the solution
to \refMinCardin and $\ropt$ as the solution to \refMinDist.

\begin{theorem}[\cite{ah-coktd-23}]%
    \thmlab{decider}%
    Given as an input a point set $\P$ and parameters $k$ and $\rad$,
    let $\kopt = \kopt(\P,\rad)$.  There is a procedure
    \decider{}$( \P, \rad, k)$, that in $O(nk\log n)$ time, either
    returns that ``$\kopt > k$'', or alternatively returns a set
    $\Qopt \subseteq \P$, such that $\cardin{\Qopt} =\kopt \leq k$,
    and $\DHY{\Qopt}{\P} \leq \rad$.
\end{theorem}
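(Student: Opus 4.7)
The plan is to reduce \refMinCardin to a cycle-covering problem on the vertices of $\CHX{\P}$ and then solve that covering problem via a greedy procedure executed from every candidate starting vertex. The reduction rests on two observations: (i)~because the Minkowski sum of $\CHX{\Q}$ with a disk of radius $\rad$ is convex, the condition $\DHY{\Q}{\P}\leq \rad$ holds iff every vertex of $\CHX{\P}$ lies within distance $\rad$ of $\CHX{\Q}$; and (ii)~any point of $\Qopt$ lying in the interior of $\CHX{\P}$ can be swapped for a hull vertex of $\P$ without shrinking $\CHX{\Q}$. Hence we may assume $\Qopt$ consists solely of vertices of $\CHX{\P}$, listed in cyclic order as $v_1,\dots,v_m$ with $m\le n$.

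Call a chord $v_iv_j$ \emph{valid} if every hull vertex strictly between $v_i$ and $v_j$ (cyclically) lies within $\rad$ of the segment $v_iv_j$, and let $f(i)$ denote the largest cyclic index $j$ for which $v_iv_j$ is valid. A convexity argument shows that, as $j$ advances along the hull, the distance from each fixed intermediate vertex to chord $v_iv_j$ is monotone non-decreasing, so $f(i)$ is well defined; symmetrically, $f$ is itself non-decreasing around the cycle in $i$. From this monotonicity, a standard exchange argument yields that once the first chosen vertex is fixed, the greedy iteration $v_j\mapsto v_{f(j)}$ uses the fewest possible vertices. Consequently $\kopt$ equals the minimum over $i$ of the number of greedy jumps needed to cyclically return to $v_i$. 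The algorithm iterates over all $m$ starting vertices, runs greedy for at most $k+1$ steps each, and returns the best feasible set; if no start finishes within $k$ jumps, it reports ``$\kopt>k$''.

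The main obstacle is executing each greedy jump in $O(\log n)$ time: given the current vertex $v_j$, compute the largest $l$ with chord $v_jv_l$ valid. Since the chord depends on both endpoints, the distance from a fixed intermediate vertex $v_t$ to segment $v_jv_l$ is a non-linear function of $v_l$. I would handle this by preprocessing the cyclic sequence of hull vertices with a segment-tree-style structure whose internal nodes store, in a linearized form (e.g., via signed cross-product summaries), enough information to locate the first violated constraint inside any range in $O(\log n)$ time; the chord-validity monotonicity above lets the query descend the tree without backtracking. After $O(n\log n)$ preprocessing, each of the $O(n)$ starts performs at most $k+1$ jumps at $O(\log n)$ cost each, for a total running time of $O(nk\log n)$, producing $\Qopt$ of size exactly $\kopt$ with $\DHY{\Qopt}{\P}\leq\rad$, or correctly reporting $\kopt>k$.
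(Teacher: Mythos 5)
The theorem you are asked about is not proved in this paper: it is imported verbatim from Agarwal and Har-Peled \cite{ah-coktd-23} and used as a black box, so there is no in-paper proof to compare against. Judged on its own, your argument has a fatal gap at the very first reduction step.

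Observation~(i) is correct, but observation~(ii) is false, and so is the conclusion you draw from it. It is not true in general that an interior point of $\Qopt$ can be swapped for a vertex of $\CHX{\P}$ without shrinking the hull, nor --- more importantly --- that one may assume $\Qopt\subseteq \VX{\CHX{\P}}$. Consider $\P=\{(0,0),(1,0),(1,1),(0,1),(0.5,0.1)\}$ and $\rad=0.5$. The subset $\Q=\{(0,1),(1,1),(0.5,0.1)\}$ uses the strictly interior point $(0.5,0.1)$ and satisfies $\DHY{\Q}{\P}=1/\sqrt{1.8^2+1}\approx 0.49<\rad$, so $\kopt(\P,\rad)=3$. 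However, every $3$-element subset of the four square vertices omits one corner, and that corner lies at distance $1/\sqrt{2}\approx 0.71>\rad$ from the resulting triangle, so any solution drawn only from $\VX{\CHX{\P}}$ needs all $4$ vertices. Restricting to hull vertices therefore changes the answer, and the cyclic-covering problem your algorithm solves is not equivalent to \refMinCardin.

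The rest of your plan --- valid chords, the monotone jump function $f$, the greedy/exchange argument, $O(n)$ starting vertices with $O(\log n)$-time jumps --- is essentially the known approach for the special case where $\P$ is in convex position (the setting of \cite{kr-fechs-21}), where your reduction step is vacuous because $\P=\VX{\CHX{\P}}$. The content of \thmref{decider} is precisely that \cite{ah-coktd-23} handles general $\P$, where interior points of $\CHX{\P}$ may be essential to an optimal cover; that is the idea your proof is missing, and the counterexample above shows it cannot be recovered by any local swap of an interior point for a hull vertex.
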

The above theorem readily implies that the problem \refMinCardin can
be solved in $O(n\kopt\log n)$ time.

Given an input of size $n$, an algorithm runs in $O(f(n))$ time
\emphw{with high probability}, if for any chosen constant $c>0$, there
is a constant $\alpha_c$ such that the running time exceeds
$\alpha_c f(n)$ with probability $<1/n^c$.

\section{Algorithm}

\newcommand{\CE}{\Mh{\Xi}}%
\newcommand{\LE}{\Mh{\mathcal{L}}}%
\newcommand{\VE}{\Mh{\mathcal{V}}}%

\subsection{The canonical set}

Given an instance $P,k$ of \refMinDist, let $\Qopt$ denote an optimal
solution. Recall that
\begin{equation*}
    \ropt = \DHY{\Qopt}{\P} = \max_{p \in P} \dsY{p}{\CHX{\Qopt}}.
\end{equation*}
Assume that $\ropt>0$, which can easily be determined by checking if
$\cardin{\VX{\CHX{P}}}> k$, where $\VX{\CHX{P}}$ denotes the set of
vertices of $\CHX{P}$.  Let
\begin{equation*}
    p = \arg\max_{p' \in P} \dsY{p'}{\CHX{\Qopt}},
\end{equation*}
and let $q$ be its projection onto $\CHX{\Qopt}$, i.e.\
$\ropt=\dY{p}{q}$.  Observe that $q$ either lies on a vertex of
$\CHX{\Qopt}$ or in the interior of a bounding edge. Since
$\Qopt\subseteq \P$, we can conclude that $\ropt$ is either (i) the
distance between two points in $\P$, or (ii) the distance from a point
in $\P$ to the line passing through two other points from $\P$. Note
that, in case (ii), $q$ must be the orthogonal projection of $p$ on to
the line $\ell$ supporting the edge, and that $p$ must be the furthest
point from $\ell$ out of the points that lie in one of its two
defining halfplanes. In particular, for an ordered pair $a,b\in \P$
define $\ell_{a,b}$ as the line through $a$ and $b$, directed from $a$
to $b$, and let $\P_{a,b}$ be the subset of $\P$ lying in the
halfspace bounded by and to the left of $\ell_{a,b}$. We thus define
the following two sets.
\begin{align}
  \CCCGVer{&}
             \VE = \Set{\bigl.\dY{x}{y}}{ x,y\in \P}%
             \CCCGVer{\nonumber\\}%
  \RegVer{\qquad}\text{and}\RegVer{\qquad}%
  \CCCGVer{\qquad&}
                   \CCCGVer{\nonumber\\&}%
  \LE = \Set{\bigl.\smash{\max_{p\in \P_{a,b}}}
  \dsY{p}{\ell_{a,b}}}{ a,b\in \P}.
  \eqlab{L:E}
\end{align}
The set $\CE=\VE \cup \LE$ is the \emphi{canonical set} of distance
values (i.e., the set of all \emph{critical} values).  By the above
discussion, we have $\ropt\in \CE$.

Observe that $\VE$ and $\LE$ (and hence $\CE$) have quadratic
size. Thus we will not explicitly compute these sets. Instead we will
search over $\VE$ using the following ``median'' selection
procedure.

\begin{theorem}[\cite{cz-hplsfc-21}]%
    \thmlab{selection}%
    Given a set $\P\subset \Re^2$ of $n$ points, and an integer $k>0$,
    with high probability, in $O(n^{4/3})$ time, one can compute the
    value of rank $k$ in $\VE$.
\end{theorem}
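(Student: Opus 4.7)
The plan is to design a randomized algorithm with two components: a fast decision oracle, and a random-sampling reduction that invokes the oracle only $O(1)$ times in expectation. Given a threshold $r > 0$, the oracle computes $D(r) = \cardin{\{(x,y) \in \P^2 : \dY{x}{y} \le r\}}$. I would reduce this to batched disk range counting: for each $p \in \P$, count $\cardin{\P \cap D(p,r)}$. Matou\v{s}ek's simplicial partition theorem yields a partition tree on $\P$ in which each disk query costs $O(n^{1/3+\eps})$, for a total of $O(n^{4/3+\eps})$. Eliminating the $\eps$-slack requires the refined shallow-cutting and hierarchical-cutting machinery of Chan, giving the oracle an $O(n^{4/3})$ high-probability bound.

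For the selection step, the basic strategy is Floyd--Rivest style random sampling: pick random pairs as pivots, invoke the oracle to count distances below each pivot, and narrow to the correct window. A single round of $\Theta(n^{4/3}\log n)$ samples together with a Chernoff bound would localize the rank-$k$ value to an interval $[r_1,r_2]$ containing $O(n^{4/3})$ values of $\VE$, but at the cost of an extra logarithmic factor in sorting the samples. To shave that factor, I would use an iterated two-stage refinement: an initial coarse sample of $\Theta(n)$ pairs narrows the range to roughly $\widetilde{O}(n^{5/3})$ candidates, and a second, targeted sample drawn only from the survivors brings the window down to $O(n^{4/3})$ pairs with high probability. One oracle call per pivot confirms each bracket. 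I would then explicitly enumerate the remaining $O(n^{4/3})$ candidate pairs via the partition tree, augmented with shallow cuttings at radii $r_1$ and $r_2$, and finish with linear-time median selection.

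The main obstacle is the enumeration step: once the pair budget is $O(n^{4/3})$, the candidate pairs must actually be \emph{listed}, not merely counted, within the same $O(n^{4/3})$ running time and without per-pair tree-traversal overhead. This is accomplished by annotating each partition-tree cell with a ``difference list'' of pairs whose distance lies in the current window, and charging the traversal carefully against the output size. That enumeration argument, combined with the tight oracle bound, is the technical heart of the Chan--Zheng construction; the remaining ingredients (Chernoff concentration and linear-time selection on the enumerated candidates) are routine.
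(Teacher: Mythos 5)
This statement is a black-box citation in the paper: Theorem~2.5 is attributed to Chan and Zheng \cite{cz-hplsfc-21} and no proof is given or expected, so there is no ``paper's own proof'' to compare your sketch against. That said, it is worth assessing how far your reconstruction actually gets.

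The architecture you describe --- a counting oracle $D(r)$ for the decision problem, a Floyd--Rivest style sampling reduction that brackets the rank-$k$ distance using $O(1)$ oracle calls, and a final enumeration of the surviving candidate pairs --- is the standard skeleton for distance selection, and the reduction of the oracle to batched disk range counting is correct. But the proposal has two real gaps. First, it is circular on the hard part: the clean $O(n^{4/3})$ oracle bound \emph{with no polylogarithmic slack} is exactly the technical content of \cite{cz-hplsfc-21}, and you obtain it by appealing to ``the refined shallow-cutting and hierarchical-cutting machinery of Chan'' without argument; that machinery is the theorem, so nothing has been proved. Second, the enumeration step is unspecified: listing $\Theta(n^{4/3})$ pairs whose distance lies in $[r_1,r_2]$ in $O(n^{4/3})$ total time is an output-sensitive \emph{reporting} task that does not follow from the counting bound, and the ``difference list'' device is only gestured at --- you yourself flag it as the technical heart. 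There are also smaller inaccuracies in the sampling stage: you cannot ``sample only from the survivors'' uniformly without an additional range-sampling structure (or a rejection-sampling pass, which should be stated), and after $\Theta(n)$ initial samples the DKW/Floyd--Rivest bound localizes to a bracket of $\widetilde{O}(n^{3/2})$ pairs, not $\widetilde{O}(n^{5/3})$, so the quoted intermediate window is off. In short, the sketch reproduces the shape of a distance-selection algorithm but defers every step that actually makes the $O(n^{4/3})$ bound hold; given that the paper treats this result as an imported black box, a citation is both sufficient and the honest choice.
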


For values in $\LE$, the algorithm samples values and searches over them,
using a procedure loosely inspired by \cite{hr-fdre-14}. For that we
have the following standard lemma, whose proof we include for
completeness.

\begin{lemma}[\cite{dk-fdpi-83}]
    \lemlab{extreme}%
    Let $\P\subset \Re^2$ be a set of $n$ points.  Then in
    $O(n\log n)$ time one can build a data structure such that for any
    query vector $\overrightarrow{u}$, in $O(\log n)$ time, it returns
    the point of $P$ extremal in the direction $\overrightarrow{u}$,
    i.e.\ the point maximizing the dot product with
    $\overrightarrow{u}$.  Let $\extremal(\overrightarrow{u})$ denote
    this query procedure.
\end{lemma}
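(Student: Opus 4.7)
The plan is to reduce the problem to extremal queries on a convex polygon, and then exploit the unimodality of the dot product along the boundary.

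First, I would compute the convex hull $\CHX{\P}$ in $O(n \log n)$ time using any standard algorithm (e.g., Graham scan). Since the extremal point in any direction $\overrightarrow{u}$ must be a vertex of $\CHX{\P}$, the remaining points can be discarded. Let $v_1, v_2, \dots, v_h$ be the vertices of $\CHX{\P}$ in counterclockwise order, where $h \leq n$.

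Next, I would use the following structural observation: the set of directions $\overrightarrow{u}$ for which a fixed vertex $v_i$ is the (unique) extremal point is an open cone bounded by the outward normals of the two hull edges incident to $v_i$. As $i$ ranges over $1, \dots, h$, these cones partition the unit circle of directions into $h$ consecutive arcs, in the same cyclic order as the vertices. The preprocessing therefore stores the sorted sequence of angles $\theta_i \in [0, 2\pi)$ corresponding to the outward normal of the edge $v_i v_{i+1}$, together with the associated vertex for each arc. This sorting takes $O(h \log h) = O(n \log n)$ time, but since the hull vertices are already produced in angular order by most convex-hull algorithms, no additional sort is actually needed.

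To answer a query $\overrightarrow{u}$, compute $\theta(\overrightarrow{u}) \in [0, 2\pi)$ in $O(1)$ time, perform a binary search over the sorted array of edge-normal angles to locate the unique arc containing $\theta(\overrightarrow{u})$ in $O(\log n)$ time, and return the vertex associated with that arc. Correctness follows from the cone characterization above. The only mildly fussy part is handling the cyclic wraparound at angle $2\pi$ and the measure-zero case where $\overrightarrow{u}$ is exactly parallel to an edge normal (in which case any of the two vertices of that edge is extremal); both are handled by standard conventions and are not a real obstacle. An alternative implementation, which is essentially equivalent, splits the hull into its upper and lower chains and performs a binary search on each for the vertex maximizing $\langle v, \overrightarrow{u}\rangle$, exploiting the fact that this function is unimodal along each chain.
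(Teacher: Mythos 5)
Your proposal is correct and follows essentially the same approach as the paper: compute the convex hull, observe that each vertex is extremal for exactly the arc of directions bounded by the outward normals of its two incident edges, store these arcs in cyclic order, and answer queries by binary search. The paper does not mention your alternative upper/lower-chain unimodality variant, but that is a minor addendum to an otherwise matching argument.
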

\begin{proof}
    Let $\VX{\CHX{P}}=\{q_1,\ldots,q_k\}$ be labelled in clockwise
    order. Let $U(q_i)$ be the set of unit vectors
    $\overrightarrow{u}$ such that when we translate $P$ so that $q_i$
    lies at the origin, then $\overrightarrow{u}$ lies in the exterior
    angle between the normals of $q_{i-1}q_i$ and $q_iq_{i+1}$. Observe that
    $\extremal(\overrightarrow{u})=q_i$ precisely when $u\in
    U(q_i)$. Moreover, the $U(q_i)$ define a partition of the set of
    all unit vectors into $k$ sets. Thus if we maintain these
    intervals in a array, sorted in clockwise order, then in
    $O(\log k)=O(\log n)$ time we can binary search to find which
    interval $\overrightarrow{u}$ falls in. It takes $O(n\log n)$ time
    to compute $\CHX{P}$ and thus the data structure.
\end{proof}

In the next section, given a directed line $\ell$, we use the above lemma to make extremal
queries for the normal of $\ell$ lying in its left defining
halfplane. This lets us evaluate extreme points for lines supporting edges of the current hull, as well as allows us to sample values from $\LE$, for which we have the following.

\begin{corollary}
    \corlab{samp}%
    Given a set $P\subset \Re^2$ of $n$ points, after $O(n\log n)$
    preprocessing time, one can return, in $O(\log n)$ time, a value
    sampled uniformly at random from $\LE$.
\end{corollary}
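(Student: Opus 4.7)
The plan is to build, as preprocessing, the extremal-query data structure of \lemref{extreme} on $\P$ in $O(n\log n)$ time, and then generate each sample by rejection sampling over ordered pairs of points of $\P$.

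To produce a single sample, I would first pick an ordered pair $(a,b)\in \P\times \P$ with $a\neq b$ uniformly at random in $O(1)$ time; there are $n(n-1)$ such pairs and they index the elements of $\LE$. Next, letting $\overrightarrow{u}$ be the unit normal to $\ell_{a,b}$ pointing into its left halfplane, compute $p^{*}=\extremal(\overrightarrow{u})$ in $O(\log n)$ time. Because $\extremal$ returns the point of $\P$ with largest signed distance from $\ell_{a,b}$ in the direction $\overrightarrow{u}$, whenever this signed distance is positive we have $p^{*}\in \P_{a,b}$ and $\dsY{p^{*}}{\ell_{a,b}} = \max_{p\in \P_{a,b}} \dsY{p}{\ell_{a,b}}$, which is exactly the entry of $\LE$ indexed by $(a,b)$; in that case, output this value. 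Otherwise $\P_{a,b}=\emptyset$, and I would discard the trial and retry with a fresh pair.

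The main subtlety will be in arguing that the rejection step is essentially free, both in running time and in preserving uniformity. The pair $(a,b)$ has $\P_{a,b}=\emptyset$ only when $\overline{ab}$ supports $\CHX{\P}$ from the right, which under the general-position assumption means $(b,a)$ is a directed edge of $\CHX{\P}$; thus there are only $O(n)$ such bad pairs out of $n(n-1)$ ordered pairs in total. Hence every trial succeeds with probability $1-O(1/n)$ and, conditioned on success, returns a uniformly random good pair, which indexes a uniformly random element of $\LE$ (distinct good pairs yield distinct values in general position). A standard tail bound then ensures that $O(\log n)$ trials suffice with high probability, delivering the claimed $O(\log n)$ per-sample running time.
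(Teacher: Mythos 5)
Your construction is essentially the paper's: preprocess with \lemref{extreme}, sample a random ordered pair $(a,b)$, and query $\extremal$ in the left-normal direction of $\ell_{a,b}$. The paper's proof is exactly this, with no rejection step, and that is all that is needed. The rejection loop you add to handle ``$\P_{a,b}=\emptyset$'' addresses a case that does not actually arise: $\P_{a,b}$ is defined as the subset of $\P$ in the \emph{closed} left halfplane of $\ell_{a,b}$, so it always contains $a$ and $b$, and the maximum in the definition of $\LE$ is simply $0$ when no point lies strictly to the left. Returning that $0$ is harmless for the surrounding algorithm, since $\ropt>0$ and Stage~II only retains sampled values inside the current interval $(r,R)$ with $r>0$. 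So what your rejection step buys is sampling from $\LE\setminus\{0\}$ instead of the multiset over all ordered pairs; neither is literally ``uniform over the set $\LE$,'' and both suffice for the algorithm, but yours is strictly more work for no gain.

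There is also a small internal inconsistency in your time bound: you invoke ``$O(\log n)$ trials,'' each costing $O(\log n)$ for the extremal query, yet claim a total of $O(\log n)$ per sample; that product is $O(\log^2 n)$. The bound is salvageable because, as you correctly observe, only $O(n)$ of the $n(n-1)$ ordered pairs are bad (those whose chord is a hull edge with $\P$ on the right), so a trial fails with probability $O(1/n)$ and a constant number of trials already gives failure probability $1/n^{c}$ for any fixed $c$. Phrase it as $O(1)$ trials, or drop the rejection altogether as the paper does, and the $O(\log n)$ per-sample bound is immediate.
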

\begin{proof}
    Sample uniformly at random a pair of points from $\P$, and then
    use \lemref{extreme} for the normal to the line passing through
    this pair of points.
\end{proof}

\subsection{The algorithm in stages}

The input is a set $\P$ of $n$ points, and a parameter $k$. The task
at hand is to compute the minimum distance $\ropt$, such that there is
a subset $\Q \subseteq \P$ of size $k$, such that
$\DHY{\Q}{\P}\leq \ropt$.

\paragraph{Searching and testing for the optimal value.}

The algorithm maintains an interval $(r,R)$, such that the following invariants are maintained:
\begin{compactenumI}
    \item $\kopt(\P,r) >k$,
    \smallskip%
    \item $\kopt(\P,R) \leq k$, and
    \smallskip%
    \item $\ropt(\P,k) \in (r,R)$.
\end{compactenumI}
\smallskip%
(The first two conditions are actually implied by the last condition,
though for clarity we list all three.)  In the following, let
$\delta>0$ denote an infinitesimal\footnote{The algorithm can be
   describe without using infinitesimals, but this is somewhat
   cleaner.}. Given a value $\rad \in (r,R)$, one can decide if
$\rad=\ropt(\P,k)$, by running $\decider(\P,\rad, k)$ and
$\decider(\P,\rad -\delta, k)$, see \thmref{decider}. If
$\decider(\P,\rad -\delta, k)$ returns that
$\kopt( \P, \rad-\delta) >k$ and $\kopt( \P, \rad) =k$ then clearly
$\rad$ is the desired optimal value. In this case, the algorithm
returns this value and stops.

\paragraph{Updating the current interval.}

After testing if $\rad=\ropt(\P,k)$ for a value $\rad \in (r,R)$ as described above, if $\rad\neq \ropt(\P,k)$ then the algorithm
can update the current interval. Indeed, if
$\decider(\P,\rad, k)$ returns that $\kopt( \P, \rad) >k$, then the algorithm sets the current
interval to $(\rad,R)$. Otherwise, $\decider(\P,\rad-\delta, k)$ returned that $\kopt( \P, \rad-\delta) \leq k$ and so the algorithm sets the current interval to $(r,\rad)$.

\paragraph{Stage I: Handling pairwise distances.}

The algorithm sets the initial interval to $(0, \infty)$. (Recall as
discussed above that we can assume $\ropt>0$.) The algorithm then
binary searches over all pairwise distance from $\VE = \binom{\P}{2}$
by using the distance selection procedure of \thmref{selection}, in
the process repeatedly updating the current interval as described
above. If $\ropt\in \VE$, then the algorithm will terminate when the
search considers this value. Otherwise, this search reduces the
current interval to two consecutive pairwise distances from $\VE$,
$r<R$, such that $\ropt\in (r,R)$ and the current interval $(r,R)$
contains no pairwise distance of $\P$ in its interior.

\paragraph{Stage II: Sampling edge-vertex  distances.}
The algorithm samples a set $\Pi$ of $O(n^{3/2}\log n)$ values from
$\LE$, see \Eqref{L:E}, using \corref{samp}. Let $U$ be the subset of
values of $\Pi$ that lie inside the current interval. The algorithm
binary searches over $U$, repeatedly updating the current interval as
described above (by doing median selection so that $U$'s cardinality
halves at each iteration).  If $\ropt\in U$ then the algorithm will
terminate when the search considers this value. Otherwise, the search
further reduces to the interval to $I'=(r',R')$.  (Which as discussed
below, with high probability, contains $O(\sqrt{n})$ values from
$\LE$.)

\paragraph{Stage III: Peeling the critical  edge-vertex  distances.}
The algorithm now continues the search on the interval $I'= (r',R')$
and critical values in it, $I' \cap \CE = I' \cap \LE$. In particular,
the solution computed by $\decider(\P,R', k)$ is a set
$\CS \subseteq \P$ of size $\leq k$ such that $\DHY{\CS}{\P} \leq R'$.
For every edge on the boundary of $\CHX{Q}$ the algorithm now computes
the point from $\P$ furthest away from the line supporting the edge
(among the points in the halfplane not containing $\CHX{Q}$), using
extremal queries from \lemref{extreme}.  Let $\alpha$ be the largest
such computed value over all the edges, and observe that
$\alpha = \DHY{\CS}{\P}$.\footnote{$\DHY{\CS}{\P}$ must be realized at
   a value from $\LE$ as Stage I eliminated $\VE$ values, and thus it
   sufficed to consider furthest distances to the lines supporting
   edges rather than the edges themselves, since at the maximum such
   value they must align.}
If $\alpha < R'$, then $\alpha \geq \ropt(\P,k)$.
The algorithm tests if $\alpha= \ropt(\P,k)$, and if so it terminates.
Otherwise, it must be that the optimal value lies in the interval $(r',\alpha)$.
As $\alpha \in (r', R')$ and $\alpha\in \LE$, our new interval $(r',\alpha)$ has at least one fewer value from $\LE$.
The algorithm now continues to the next iteration of Stage III.

The case when $\alpha =R'$ (i.e., the higher end of the active
interval) is somewhat more subtle. The algorithm calls
$\decider(\P, k, \alpha - \delta)$ to compute a set $\CS'$ that
realizes $\kopt(\P,\alpha-\delta)$, where $\delta$ is an
infinitesimal. Observe that $\kopt(\P,\alpha-\delta)\leq k$, as
otherwise $\alpha=R'$ was the desired optimal value. Let
$\beta = \DHY{\CS'}{\P}$, which can be computed in a similar fashion
using $Q'$ as $\alpha$ was computed using $Q$.
The algorithm tests if $\beta= \ropt(\P,k)$, and if so it terminates.
Otherwise, by the same reasoning used above for $\alpha$, we can conclude our new interval $(r',\beta)$ has at least one fewer value from $\LE$, and thus the algorithm continues to the next iteration of Stage III on the interval $(r',\beta)$.

\subsection{Analysis}

\paragraph{Correctness.}
The correctness of the algorithm is fairly immediate given the
discussion above. Namely, the algorithm maintains an interval $(r,R)$
with the invariant that $\ropt(\P,k)\in (r,R)$ (where initially this
interval is $(0,\infty)$). In each step of each stage a value
$\rad\in (r,R)$ that is either from $\VE$ (in Stage I) or from $\LE$
(in Stages II and III) is determined. For this value $\rad$ we then
update the current interval as described above. Namely, we query
$\decider(\P,\rad,k)$ and $\decider(\P,\rad-\delta,k)$.
If these calls return that  $\kopt( \P, \rad) \leq k$ and $\kopt( \P, \rad-\delta) > k$ then $\rad=\ropt(\P,k)$ and the algorithm terminates.
Otherwise, if $\kopt( \P, \rad) >k$ the algorithm proceeds on $(\rad,R)$, and if $\kopt( \P, \rad-\delta) \leq k$ then it proceeds on $(r,\rad)$.
In either case the interval contains at least one fewer value from $\CE$, and thus eventually the algorithm must terminate with the value $\ropt(\P,k)$.

\paragraph{Running time analysis.}
In Stage I the algorithm performs a binary search over
$\VE = \binom{\P}{2}$. This is done using the distance selection
procedure of \thmref{selection}, which with high probability takes
$O(n^{4/3})$ time to determine each next query value. Each query is
answered using the $O(nk\log n)$ time $\decider(\P,\cdot ,k)$ from
\thmref{decider}. Thus in total Stage I takes
$O\bigl((n^{4/3}+nk\log n)\log n\bigr)$ time with high
probability. Here, by the union bound, a polynomial number of high
probability events (i.e.\ the events that each call to selection
occurs in $O(n^{4/3})$ time), all occur simultaneously with high
probability.

In Stage II the algorithm samples $O(n^{3/2}\log n)$ values from $\LE$
using the $O(\log n)$ time sampling procedure of \corref{samp}. Next,
the algorithm binary searches over these values (this time directly),
again using $\decider(\P,\cdot ,k)$. Thus in total Stage II takes
$O(n^{3/2}\log^2 n + nk\log^2 n)$ time.

Stage III begins with some interval $(r',R')$. Let
$X=\cardin{\LE\cap (r',R')}$.  In each iteration of Stage III, for
some subset $\Q\subseteq \P$ of size at most $k$, the algorithm
computes $\alpha = \DHY{\CS}{\P}$. This is done using at most $k$
calls to the $O(\log n)$ query time \lemref{extreme}. (This same step
is potentially done a second time for $\beta = \DHY{\CS'}{\P}$). Each
iteration of Stage III also performs a constant number of calls to
$\decider(P,\cdot,k)$, thus is total one iteration takes
$O(k\log n+nk\log n) = O(nk\log n)$ time. As argued above each
iteration of Stage III reduces the number of values from $\LE$ in the
active interval by at least 1, and thus runs for at most $X$
iterations. Thus the total time of Stage III is $O(Xnk\log n)$.

Observe that since Stage II sampled a set $\Pi$ of $O(n^{3/2}\log n)$ values from the $O(n^2)$ sized set $\LE$, the interval between any two consecutive values of $\Pi$ with high probability has $O(\sqrt{n})$ values from $\LE$. As the interval $I'=(r',R')$ returned by Stage II is such an interval, with high probability $X=O(\sqrt{n})$. As the running time of Stage II dominates the running time of Stage I (with high probability), we thus have that with high probability the total time of all stages is
\begin{align*}
  &O(n^{3/2}\log^2 n+(\log n + X)nk\log n)
    \CCCGVer{\\&}%
  =
  O(n^{3/2}\log^2 n + n^{3/2} k \log n + nk \log^2 n)%
  \CCCGVer{\\&}%
  =%
  O(n^{3/2} (k + \log n) \log n ).
\end{align*}

\paragraph{Slightly improving the running time.} %
Observe that if the algorithm samples $O(n t \log n)$ values in stage
II, then with high probability the last two stages take
\begin{equation*}
    O\pth{ n t \log^2 n + \pth{\frac{n^2}{nt} + \log n} k n \log n}
\end{equation*}
time. Solving for $t$, we have
\begin{equation*}
    n t \log^2 n = (n^2/t) k  \log n
    \implies%
    t^2 = n k/\log n.
\end{equation*}
Thus, setting $t=\sqrt{nk / \log n }$, and including the running time of stage I, we get the improved high probability running time bound
\begin{align*}
  &O\pth{ n^{4/3} \log n +
    n t \log^2 n + \pth{\frac{n^2}{nt} + \log n} k n \log n}
  \\&
  =%
  \RunningTime.
\end{align*}
In summary, we get the following result.

\begin{theorem}
    \thmlab{main:res}%
    Given an instance of \refMinDist, consisting of a set
    $P\subset \Re^2$ of $n$ points and an integer $k$, the above
    algorithm computes a set $\Qopt \subseteq \P$, of size $k$, that
    realizes the minimum Hausdorff distance between the convex-hulls
    of $\P$ and $\Qopt$ among all such subsets -- that is,
    $\roptY{\P}{k} = \DHY{\P}{\Qopt}$. The running time of the
    algorithm is $\RunningTime$ with high probability.
\end{theorem}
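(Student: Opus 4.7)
The theorem bundles two claims: that the algorithm returns a set realizing $\ropt(\P,k)$, and that its running time is $\RunningTime$ with high probability. For correctness I would invoke the canonical-set claim $\ropt\in\CE=\VE\cup\LE$ established earlier and verify that the invariants $\kopt(\P,r)>k$, $\kopt(\P,R)\le k$, and $\ropt\in(r,R)$ persist throughout execution. The initial interval $(0,\infty)$ satisfies them because the preprocessing check on $\cardin{\VX{\CHX{\P}}}$ rules out $\ropt=0$. At each iteration the paired \decider{} calls at $\rad$ and $\rad-\delta$ either certify $\rad=\ropt$ (and the algorithm halts) or identify which side of $\rad$ contains $\ropt$; in the latter case the new interval retains the invariants and loses at least one value from $\CE$, so the process must terminate.

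For the running time I would treat the three stages separately. Stage I performs $O(\log n)$ binary-search steps over $\VE$, each doing one distance-selection query at $O(n^{4/3})$ w.h.p.\ (\thmref{selection}) followed by $O(1)$ \decider{} calls at $O(nk\log n)$; a union bound over the selection calls yields $O\bigl((n^{4/3}+nk\log n)\log n\bigr)$ w.h.p. Stage II draws a sample $\Pi$ of size $O(nt\log n)$ from $\LE$ via \corref{samp} at cost $O(nt\log^2 n)$, then binary searches the values of $\Pi$ inside the current interval using $O(\log n)$ additional \decider{} calls. Stage III runs at most $X=\cardin{\LE\cap I'}$ iterations, each performing $O(k)$ extremal queries via \lemref{extreme} on the hull of the current \decider{} output plus $O(1)$ \decider{} calls, for $O(Xnk\log n)$ overall.

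The main obstacle is bounding $X$ w.h.p. I plan a standard random-sampling argument: for $m=\alpha n/t$ with $\alpha$ a sufficiently large constant, enumerate the $O(n^2/m)$ contiguous blocks of $m$ values in sorted $\LE$; the probability that a fixed block receives no sample is at most $\bigl(1-m/\cardin{\LE}\bigr)^{nt\log n}\le n^{-\Omega(\alpha)}$, and a union bound ensures every block is hit w.h.p. Since $I'$ lies between two consecutive values of $\Pi$, this forces $X=O(n/t)$, which is exactly the quantity that controls the number of Stage III peeling steps.

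Combining stage costs yields $O(n^{4/3}\log n + nt\log^2 n + (n/t)\cdot nk\log n)$ w.h.p. I would balance the last two terms by solving $nt\log^2 n = (n^{2}k\log n)/t$, giving $t=\sqrt{nk/\log n}$, and then verify that the Stage I term $O(n^{4/3}\log n)$ is absorbed by the dominant $O(n^{3/2}\sqrt{k}\,\log^{3/2} n)$ for all $k\ge 1$. This produces the claimed $\RunningTime$ bound and completes the proof.
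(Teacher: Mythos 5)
Your proposal follows the paper's proof closely: same three-stage decomposition, same invariant-maintenance argument for correctness, same parametric search balanced by $t=\sqrt{nk/\log n}$, and the same use of \thmref{selection}, \thmref{decider}, \corref{samp}, and \lemref{extreme}. Your block-partitioning argument for bounding $X$ (divide sorted $\LE$ into $O(n^2/m)$ blocks of $m=\alpha n/t$ values, show each is hit by the sample w.h.p., union bound, conclude any gap between consecutive samples spans at most two blocks) is the standard argument that the paper asserts but does not spell out, so you are filling in an implicit step rather than diverging from the paper.

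One bookkeeping slip: your combined formula $O(n^{4/3}\log n + nt\log^2 n + (n/t)\cdot nk\log n)$ drops the $O(nk\log^2 n)$ contribution from the $O(\log n)$ binary-search iterations of Stages I and II, each of which invokes \decider{} at cost $O(nk\log n)$ --- a term you correctly accounted for in your per-stage breakdown but then omitted when combining. That term is not subsumed by $n^{3/2}\sqrt{k}\log^{3/2}n$ unless $k = O(n/\log n)$, which is why $\RunningTime$ retains it as a separate additive term. With that term restored, the bound matches the theorem exactly.
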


We remark that under the reasonable assumption that $k=O(n/\log n)$ the running time can be stated more simply as $O(n^{3/2}\sqrt{k}\log^{3/2} n)$.

\section{Conclusions}

The most interesting open problem left by our work is whether one can
get a near-linear running time if $k$ is small. Even beating
$O(n^{4/3})$ seems challenging. On the other hand, if one is willing to use $2k$ points
then a near linear running time is achievable \cite{kr-fechs-21}. However, using less than $2k$ points without increasing the Hausdorff distance in near linear
time seems challenging.

\RegVer{%
   \BibTexMode{%
      \bibliographystyle{alpha}%
      \bibliography{h_min_eps}%
   }%
}
\CCCGVer{%
   \small
   \bibliographystyle{abbrv}%
   \bibliography{h_min_eps}%
}

\BibLatexMode{\printbibliography}

\end{document}